\newcommand\copyrighttext{%
  \footnotesize \textcopyright 2018 IEEE. Personal use of this material is permitted.
  Permission from IEEE must be obtained for all other uses, in any current or future
  media, including reprinting/republishing this material for advertising or promotional
  purposes, creating new collective works, for resale or redistribution to servers or
  lists, or reuse of any copyrighted component of this work in other works.
  DOI: \href{< DOI: 10.1109/Cybermatics_2018.2018.00184 }{ 10.1109/Cybermatics\_2018.2018.00184 }}
\newcommand\copyrightnotice{%
\begin{tikzpicture}[remember picture,overlay]
\node[anchor=south,yshift=10pt] at (current page.south) {\fbox{\parbox{\dimexpr\textwidth-\fboxsep-\fboxrule\relax}{\copyrighttext}}};
\end{tikzpicture}%
}
\newtheorem{theorem}{Theorem}
\newtheorem{lemma}[theorem]{Lemma}
\newtheorem{definition}{Definition}
\begin{document}

\sloppy

\title{Window Based BFT Blockchain Consensus}

\author{
\IEEEauthorblockN{Mohammad M. Jalalzai}
\IEEEauthorblockA{
Computer Science and Engineering Division\\
Louisiana State University\\
Baton Rouge, Louisiana, USA\\
Email: \url{mjalal7@lsu.edu}}
\and 
\IEEEauthorblockN{Costas Busch}
\IEEEauthorblockA{
Computer Science and Engineering Division\\
Louisiana State University\\
Baton Rouge, Louisiana, USA\\
Email: \url{busch@csc.lsu.edu}}
}

\date{}

\maketitle
\copyrightnotice

\begin{abstract}

Proof of Work (PoW) and Byzantine Fault Tolerant (BFT) are the two main classes of consensus protocols that are used in the blockchain consensus layer. PoW is highly scalable but very slow with performance of about 7 transactions/second. BFT-based protocols are highly efficient for small networks, but their scalability is limited to only tens of nodes. One of the main reasons for the BFT limitation is the quadratic $O(n^2)$ communication complexity of BFT-based protocols for $n$ nodes, which requires $n \times n$ broadcasting. 
In this paper, we present the {\em Musch} protocol which is BFT-based and provides communication complexity $O(f n + n)$ for $f$ failures and $n$ nodes, where $f < n/3$, without compromising the latency.
Hence, the performance adjusts to $f$ such that for constant $f$ the communication complexity is linear. Musch achieves this by introducing the notion of exponentially increasing windows of nodes to which complains are reported, instead of broadcasting to all the nodes.
To our knowledge, this is the first BFT-based blockchain protocol which efficiently addresses simultaneously the issues of communication complexity and latency under the presence of failures. 
\end{abstract}






\section{Introduction}
 
Consensus is used to agree on a new block to be appended to the chain by the nodes in the network. 
A blockchain is compromised of two main components:a cryptographic engine and a consensus engine. The main performance and scalability bottleneck of a blockchain also lies in these components. Here we only focus on improving consensus component of blockchains.
As already mentioned, PoW-based protocols are highly scalable.
In Bitcoin~\cite{Nakamoto_bitcoin:a}, which is one of the most successful implementation of blockchain technology, typically the number of nodes (replicas) are usually large in the range of  thousands~\cite{Nakamoto_bitcoin:a,DBLP:conf/ifip114/Vukolic15}. 
 
PoW involves the calculation of a number based on the hash value of a block adjusted by a difficulty level.
%
%
Solving this cryptographic puzzle  by nodes (miners)  limits the rate of the block generation as solving the puzzle is CPU intensive. 
Bitcoin uses PoW but the number of transactions per second can reach up to just 7 transactions per second~\cite{DBLP:conf/ifip114/Vukolic15}. The block generation rate is  approximately 10 minutes~\cite{Nakamoto_bitcoin:a}.  Additionally, the power utilized by Bitcoin mining in 2014 was between 0.1-10 GW and was comparable to Ireland's electricity consumption at that time
\cite{6912770}.  
Different solutions were proposed, for example, Ethereum \cite{Ethereum-EIP-150} uses faster PoW, BitcoinNG \cite{194906} uses two types of blocks, namely, key blocks 
and micro-blocks,
and has achieved $10 \times$ more throughput in comparison with Bitcoin. But all these solutions fall well short of matching the throughput offered by leading credit-card companies ($2000$ on average and $10000$ maximum transactions per second).

On other the hand, BFT-based \cite{Lamport:1984:UTI:2993.2994}  protocols guarantee consensus in the presence of malicious (Byzantine) nodes, which can fail in arbitrary ways including crashes, software bugs and even coordinated malicious attacks.    Typically, BFT-based algorithms execute in epochs, where in each epoch the correct (non-malicious) nodes achieve agreement for a set of proposed transactions.
In each epoch there is a {\em primary} node that helps to reach agreement.
The consensus is achieved during each epoch and an entry or a set of entries are added to the log.
In case the primary is found to be Byzantine,a  view change (select new primary) takes effect to provide liveness. These protocols have shown the ability to achieve throughput of tens of thousand transactions per second~\cite{Kotla:2008:ZSB:1400214.1400236,Guerraoui:2010:NBP:1755913.1755950}. However, their scalability has been tested with a very small number of nodes $n$, usually 10 to 20 nodes,
due to the requirement for $n \times n$ broadcast \cite{DBLP:conf/ifip114/Vukolic15}, that is, they have quadratic communication complexity. 
 

To address the scalability issues in BFT  protocols, we introduce the {\em Musch} blockchain protocol. Musch is BFT-based and achieves $O(fn + n)$ communication complexity in an epoch, where $f$ is the actual number of Byzantine nodes ($f < n/3$). For small (i.e. constant) $f$ the communication complexity is linear,
and hence, Musch has scalable performance.
Musch does not need to know the actual value of $f$ 
since it automatically adjusts to the actual number of nodes 
that exhibit faulty behavior in each epoch.
At the same time, the latency is comparable with other efficient BFT-based protocols~\cite{Kotla:2008:ZSB:1400214.1400236,Castro:1999:PBF:296806.296824}.


The performance of our algorithm is based on a novel mechanism of communication with a set of {\em window nodes}. Nodes reach sliding windows moving over node IDs to recover from faults during consensus. If a replica does not receive expected messages from the primary, 
it complains to the window nodes from which it recovers updates (see Fig.~\ref{Figure:window}). Initially, the window consists of only one node. If the complainer replica doesn't receive a valid response from any of the the window nodes, it considers the next window of double size to which it sends the complaint. The last window size is no more than $2f$ which guarantees to have a correct node within the last window. This gives $O(fn + n)$ communication complexity. In this way, Musch avoids $n \times n$ broadcasts while guaranteeing consistency. 


\begin{table}
\scriptsize{
\centering
\begin{tabular}{|l|c|c|c|c|}
\hline
& PBFT     & FastBFT   & Aliph   & Musch   \\ \hline
Total Replicas $n$  & $3f'+1$    & $2f'+1$       & $3f'+1$    & $3f'+1$ \\ \hline
Critical Path  & 4        & $3+\log(f'+1)$  & $3f'+2$    & 4 \\ \hline
\begin{tabular}[c]{@{}l@{}}Communication \\Complexity 
\end{tabular}    
& $O(n^2)$ & $O(n^2)$    & $O(n^2)$ & $O(fn + n)$ \\ \hline
\end{tabular}
}
\caption{Characteristics of state of Art BFT protocols. The actual faulty nodes is $f$, while $f'$ is an upper bound, $f \leq f'$.} 
\label{comparisontable}

\end{table}

Table \ref{comparisontable} compares Musch with other state of art BFT-based protocols
such as PBFT \cite{Castro:1999:PBF:296806.296824}, FastBFT \cite{DBLP:journals/corr/LiuLKA16a}, and Aliph \cite{Guerraoui:2010:NBP:1755913.1755950}. 
We compared the communication complexity measured as number of total exchanged messages during an epoch. Our algorithm's performance depends solely on $fn$, while the other algorithms have quadratic communication complexity.
Hence, our algorithm has an advantage when $f$ is asymptotically smaller than $n$,
resulting in less than quadratic communication complexity. 
When $f$ is a constant our algorithm is optimal.
Additionally, we also compared the critical path length, as the number of one-way message latency it takes for a client request to be processed and the response is received by the client.
Note that the total number of nodes is $n = 3f'+1$, where $f'$ is a conservative upper bound on the number of faulty nodes. The actual faults are bounded by $f \leq f'$. Our algorithm does not need to know $f$.

SBFT \cite{DBLP:journals/corr/abs-1804-01626} has also tried to address the issue of scalability and have tested their protocol with $100$ replicas, while achieving $10 \times$ better performance than Ethereum. In normal mode when $f=0$, SBFT's message complexity will be $O(nc)$ ($c$ is the number of collectors) as compared to Musch's $O(n)$. 
The actual value of Byzantine nodes ($f$) has to be known (to choose correct $c$ such that $c\geq f$) for the system to avoid the fall-back protocol. But in practice it is  impossible to know the actual value of $f$(avoiding fallback is not possible for small $c$). Fall-back mode executes efficient PBFT with $O(n^2)$ complexity. This $O(n^2)$ complexity causes additional latency and performance degradation in SBFT.
\section*{Paper Outline}

We continue with this paper as follows.
In Section \ref{section:model},
we give the model of the distributed system.
In Section \ref{section:protocol},
we present our algorithm.
Protocol checkpoints are presented in Section \ref{section:Checkpoints}. 
We give the correctness analysis in Section \ref{section:correctness},
and the communication complexity bound analysis in Section \ref{section:complexity}.

\section{System Model}
\label{section:model}
Like other BFT-based state machine replication protocols Musch also assumes an adversarial failure model. Under this model, servers and even clients may deviate from their normal behavior in arbitrary ways, which includes hardware failures, software bugs, or malicious intent. Our protocol can tolerate up to $f'$ number of Byzantine replicas where the total number of replicas in the network $n = 3f'+1$. Replica ID is an integer from the replica set $\{1, \ldots, n\}$ that identifies each  replica. The actual number of Byzantine replicas in the network is denoted by $f$, and at any moment during execution $0 \leq f \leq f'$. 
If $f=0$ then the execution is fault-free.
However, $f$ may not be known. Our algorithm's communication complexity adapts to any value of $f$.

\begin{figure}
\begin{center}
\includegraphics[width=9cm]{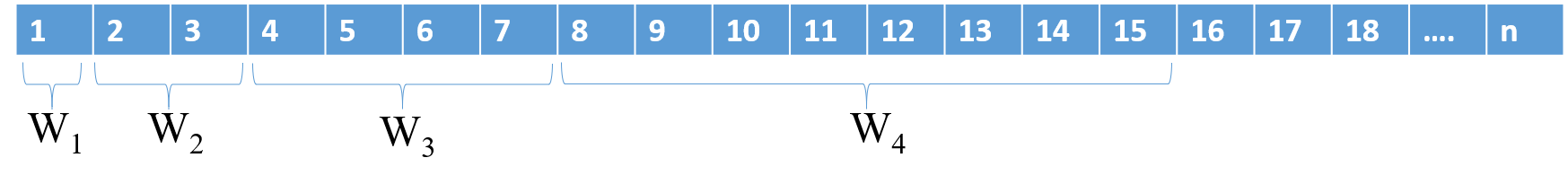}
\end{center}
\caption{Windows of nodes}
\label{Figure:window}
\end{figure}

\section{Protocol}
\label{section:protocol}

Our proposed protocol uses echo broadcast~\cite{Reiter:1994:SAP:191177.191194}, where the primary proposes a block of transactions, and replicas respond by sending back signed hashes of the block.
We assume strong adversarial coordinated attacks by various malicious replicas. However, replicas will not be able to break collision resistant hashes, encryption, or signatures. We assume that all messages sent by replicas and the primary are signed.  For example if primary  $p$ proposes a block of transactions  $\langle B \rangle_p$ to the replica $i$, we assume that it has been signed by primary $p$. Any unsigned message will be discarded.  To avoid repetition of message and signatures, Musch also uses signature aggregation \cite{Boneh:2003} to use a single collective signature instead of appending all replica signatures, to keep signature size constant.
As the primary $p$ receives message $M_i$ with their respective signatures $\sigma_i$ from each replica $i$, the primary then uses these received signatures to generate an aggregated signature $\sigma$. The aggregated signature can be verified by replicas given the messages $M_1,M_2,\ldots, M_y$ where $y \leq n$, the aggregated signature $\sigma$, and public keys $PK_1,PK_2,\ldots,PK_y$. Like other BFT-based protocols \cite{Castro:1999:PBF:296806.296824,Lamport:1982:BGP:357172.357176,Luu:2016:SSP:2976749.2978389} each replica $i$ knows the public keys of other replicas in the network. In Section \ref{Recovery mode} we explain how to use the IDs to define the windows that we use in the algorithm. 

It is not possible to ensure the safety and liveness of consensus algorithms in asynchronous systems where even a single replica can crash fail \cite{Fischer:1985:IDC:3149.214121}.  Musch's safety holds in asynchronous environments. But to circumvent this impossibility for liveness, Musch assumes partial synchrony~\cite{Dwork:1988:CPP:42282.42283}.  This partial synchrony is achieved by using  arbitrarily large unknown but fixed worst case global stabilization delays. 

During normal operation, Musch guarantees that at least $2f'+1$ replicas in each epoch are consistent (out of the $n = 3f'+1$).
Let $T$ be the maximum round-trip message delay in the network.
In our algorithm, at any moment of time,
the suffix of the execution histories between any two replicas 
differ by at most the maximum number of blocks that 
can be committed during a time period of $O(T \log f')$.
Thus, any inconsistency is limited to only a small period of time.

Musch executes in epochs. An epoch is a slot of time in which $2f'+1$ replicas receive block $B$ proposed by the primary $p$ and agree to commit it. Thus, during each epoch a block is generated and added to the chain. Since $p$ is responsible for aggregating replica signatures for block agreement, if less than $2f'+1$ replica signatures are collected then a view change will be triggered and the primary will be changed. It should be noted that $p$ is also responsible for collecting transactions from clients, ordering the transactions, and sending them to the replicas. 


\subsection{Normal Operation}


As shown in Algorithm \ref{Algorithm2:Primary},
the primary $p$ collects a set of transactions from the clients into an ordered list of transactions $L^a$ (which it will propose in a candidate block) with a sequence number $s$, view number $v$, hash $d=hash(L^a)$, and hash history $h_s = Hash(h_{s-1},d)$ into candidate block
$B= \langle\langle ORDER, s, v, d, h_s \rangle_p , L^a \rangle$.
Primary $p$ then proposes (broadcasts) the candidate block $B$ to each replica $i$.
As shown in Algorithm \ref{Algorithm1:Replica},
upon receipt of $B$ each replica $i$ validates the information,
and then replica $i$ responds to the primary $p$ with the willingness to accept the block in a message $H_i=\langle RESPONSE, s, v,d,i \rangle_i$ to the primary. 

The primary collects at least $2f'+1$ responses from the replicas, aggregates them to $H$, and generates a compressed aggregated signature $\sigma$
\cite{Boneh:2003}. 
Then, the primary broadcasts $\langle  COMMIT, H \rangle_{\sigma}$. 
Upon receipt, each replica $i$ 
verifies $2f'+1$ signatures and the candidate block $B$ commits. 
If verified successfully each replica $i$ responds to the client with the reply message $\langle  REPLY,s, v, c , r ,t, i \rangle_{\sigma_i}$, where $c$ is the client, $t$ is the timestamp and $r$ is the result of execution. Upon receipt of  $f'+1$ valid $REPLY$ messages (which might take $2f'+1$ messages to receive) a client accepts the result.
Assuming a continuous creation of blocks, 
the primary starts the new epoch immediately after the old epoch finishes.
Let $T$ be the maximum delivery delay of a message in the network.
According to the protocol, in the epoch of the new block $B$ with sequence number $s$ there will be two messages that replica $i$ expects to receive from the primary:
(i) the $ORDER$ type message for block $B$ with sequence $s$ within $\Delta_2 = T$ time from the end of the previous epoch, and then 
(ii) the $COMMIT$ type message for block $s$ within $\Delta_3 = 2T$ time since the receipt of the $ORDER$ message.
Therefore, 
the maximum time for an epoch for a replica $i$ is $\Delta_1 = \Delta_2 + \Delta_3$.
A replica $i$ goes into recovery mode 
at time $\Delta_1$ if either of the two expected messages is not received.

\SetKwFor{Upon}{upon}{do}{end}
\SetKwFor{Check}{check always}{that}{end}
\begin{algorithm}[t]

\DontPrintSemicolon 
\caption{Primary $p$}
\label{Algorithm2:Primary}

Latest committed block sequence number is $s_p$\;
\Upon{receipt of transactions from a set of clients $C$}{
Create a block $B$ with sequence number $s_p + 1$\;
Broadcast $B$ to replicas\;
\Upon{receipt of $2f'+1$ hashes $H_i$ of $B$ from replicas}{
Aggregate the hashes into $H$\;
Commit ($B$,$H$)\;
Broadcast $H$ to replicas\;


Send $REPLY$ to client set $C$\;
}
}
\end{algorithm}
\begin{algorithm}[t]
\DontPrintSemicolon
\caption{Replica $i$}
\label{Algorithm1:Replica}

\tcp{Normal Execution}
Latest committed block sequence number is $s_i$\;
\Upon{receipt of block $B$ from primary $p$ with sequence number $s$}{
Calculate hash $H_i$ of block $B$\;
Send $H_i$ to primary $p$\;
\Upon{receipt of aggregated hash $H$ for block $B$ from primary}{
\If{$H$ is signed by at least $2f'+1$ replicas}{Commit $(B,H)$\;
Send $REPLY$ to each client $c$ \;
}
}
}
\BlankLine
\tcp{Special Cases}
\Check{at any time}{
\If{no receipt of expected $s_i+1$ block $B$ or respective hash $H$ within a timeout period}{
Execute Algorithm \ref{Algorithm-3:recovery-i} with parameter $Complain$}
\If{receipt of a block $B$ with sequence $s > s_i + 1$} {
Execute Algorithm \ref{Algorithm-3:recovery-i} with parameter $Complain$}
\If{receipt of valid set of complains $S$ with $f'+1$ complainers}{
Execute Algorithm \ref{ViewChange-Replica} \tcp{initiate view change}
}
}
\end{algorithm}


\subsection{Recovery Mode}
\label{Recovery mode}

In BFT protocols, when a replica detects an error it broadcasts complaints 
to all replicas in the network.  In contrast to this, a replica in Musch during a failure event will only complain to a subset of replicas in the network called window nodes. If $i$ did not receive a response from the current window then the replica complains to the next window of double size until it receive response from at least one correct replica. 
The window sequences are fixed, $W_1, W_2, \ldots, W_{k'}$,
where $k' = \lceil \lg(f' + 1) \rceil$.
Suppose the replica IDs are taken from the set $\{1, \ldots, n\}$ 
and sorted in ascending order (see Fig.~\ref{Figure:window}).
The window $W_1$ consists of a single node with the smallest ID,
Window $W_2$ consists of two replicas with the next IDs in order, 
Window $W_3$ consists of four replicas with the next higher IDs, and so on.
Therefore the window $W_j$ consists of $2^{j-1}$ replicas,
whose IDs are ranked between $2^{j-1}, \ldots, 2^{j}-1$.
During the execution of the algorithm,
the maximum window that will be contacted 
is actually $k = \lceil \lg(f + 1) \rceil$, where $k \leq k'$, 
since this guarantees that at least one correct node will be encountered
among all the window nodes from $W_1$ up to $W_k$.

\SetKwInput{Param}{Parameters}

\begin{algorithm}[t]
\DontPrintSemicolon
\caption{Fault Recovery in Replica $i$}
\label{Algorithm-3:recovery-i}


\Param{$Complain$ from $i$}

Let $l$ be the block sequence number in $Complain$ for which $i$ has not received either $B$ or $H$\;

$j=1$ \tcp{window index}


\tcp{current window is $W_j$} 

\eIf{$i \in W_j$}{
\label{loopline}
\tcp{all window nodes prior to $W_j$ are faulty}
Broadcast $COMPLAIN$ message to replicas\;} 
{
\tcp{$i$ is in a later window than $W_j$ or $i$ is not a window node at all}
Send $COMPLAIN$ to all nodes in window $W_j$\;
\If{there is no commit by a certain timeout}
{ $j = j + 1$ \tcp{increase window}
Goto Line \ref{loopline} 
}

}
\tcp{listen for responses}

Let $l' \geq l$ be the expected sequence number of blocks in the time period since $Complain$ issued

\Upon{receipt of blocks and respective hashes up to at least $l'$}{

Commit all received pairs of block and hash $(B,H)$
}
\end{algorithm}


\begin{algorithm}[t]
\DontPrintSemicolon  
\caption{Window Node $i$}
\label{Window Nodes}

\Upon{receipt of $COMPLAIN$ or $PROOF$ message from replica $j$}{
	\If{$COMPLAIN$ from $j$ is valid}{

		Add $COMPLAIN$ by distinct complainer to the set of complains $S$\;

        \eIf{distinct number of complainers in $S$ is at least $f'+1$}{
        	Broadcast $S$\;
        	Execute Algorithm \ref{ViewChange-Replica}\;
            Reset $S$ to empty\;}   
        {Let $l$ be the sequence number of block requested in $COMPLAIN$\;
         \If{$i$ has the $l$th block and its hash}{
          	Send all blocks and respective hashes starting from sequence $l$ up to the latest to replica $j$\;
          }
        }
   }
   \ElseIf{$PROOF$ is valid}{
     Broadcast $PROOF$ to replicas\;
     Execute Algorithm \ref{ViewChange-Replica}\;
     Reset $S$ to empty\;
   }
}
\end{algorithm}

Algorithm \ref{Algorithm-3:recovery-i} describes how a replica complains to the window(s),
and Algorithm \ref{Window Nodes} shows the respective reactions from the window nodes.
As shown in Algorithm \ref{Algorithm-3:recovery-i},
if replica $i$ complains that it didn't receive expected message ($ORDER$ or $COMMIT$) from $p$ during normal operation, it sends the complaint in the form of $\langle COMPLAIN, s, v, d \rangle_i$, where $d$ and $s$ belongs to the last committed block in the chain of $i$. If it complains to a window $W_j$,
this message is sent to all nodes in $W_j$ which will then know that replica $i$ does not have $ORDER$ or $COMMIT$ messages after block $s$. If replica $i$ has received a message from the primary that proves the maliciousness of $p$, then it attaches the proof in its complaint  $\langle COMPLAIN, PROOF \rangle_i$ to $W_j$.

When $i$ enters the recovery mode
it first complains to window $W_1$, which has a single node.
If $i$ doesn't get any useful response from $W_1$ then it complains to $W_2$,
which has two nodes, so it informs both nodes.
This process can repeat until $i$ contacts all nodes in $W_k$, the last window.
It is guaranteed that replica $i$ will get a response from a correct node in one of these windows.
As shown in Algorithm~\ref{Window Nodes},
the window nodes respond to complaints by returning the requested information.
If they do not have it then they call themselves Algorithm \ref{Algorithm-3:recovery-i} as well.
If the complainer $i$ is a window node itself, it will stop until it reaches its own window size and will broadcast the complaint. Upon broadcast it is guaranteed that it will receive response. The response can be either receipt of missing messages or a view change. If replica $i$ received the missing messages it will forward it to the complainers that it knows, else it will result in view change (primary will be replaced).

Note that regular replicas and window nodes may be complaining at the same time
and probably for the same reason.
A regular node will have to wait for the window nodes to first obtain a response.
It is important to coordinate the actions of the windows nodes
and the regular replicas to receive the responses efficiently without message replication.
For a regular replica $i$,
the timeout period for waiting a response from the window $W_j$ is at most $\Lambda_j = j3T + 6T$. As it takes $\Delta_1=3T$ to detect timeout for the current epoch, then it takes at most $j3T$ to receive a message from the previous window and send the message back to the replica. In case window $j$ does not receive a message from window $j-1$, it will broadcast its complaint and it is guaranteed that it will receive a response, which it will send back to replica $i$ ($3T$).  
From the start time $t$ of the current epoch,
if $i$ does not get a response within $t + \Lambda_j$ then it will contact the next window $W_{j+1}$.


\subsection{View Change}

\label{View Change}

A view change can be triggered if a correct window node $i \in W_l$ receives at least $f'+1$ distinct replica complaints (against  primary $p$) as shown in Algorithm \ref{Window Nodes}. This guarantees that at least one of the complaints is coming from a correct replica. 

Another reason for view change can be the receipt of an explicit $PROOF$ against $p$ by window node $i$.
Once view change is triggered, window node $i$ broadcasts the set of $COMPLAIN$ or $PROOF$ messages it has received to all replicas (Algorithm \ref{Window Nodes}).

Without loss of generality, consider the case where window node $i$ has sent $PROOF$ to all replicas (the same mechanism also applies to other sets of $COMPLAIN$ messages). 
Upon receipt of $PROOF$ 
a replica $j$ increments its view number ($v=v+1$) and assigns new primary $p'$ (namely, $p' =  v \mod n$) (Algorithm \ref{ViewChange-Replica}). 
 
Replica $j$ then adds its most recent block hash $d$ and block number $s$ in the message along with $PROOF$ in a message $\langle VIEWCHANGE, PROOF , s,d,j \rangle_j$ and sends it to the new primary $p'$ (Algorithm~\ref{ViewChange-Replica}). 

Upon receipt of at least $2f'+1$ view change messages from different replicas, $p'$ stores them into set $Q$. 
Then, $p'$ broadcasts $\langle Q \rangle_{\sigma}$, where $\sigma$ is an aggregated signature for all replicas involved in $Q$ (Algorithm \ref{ViewChange-Primary}). Upon receipt of this message, each replica recovers the latest block  history. Assume $s'$ is the highest block number committed so far in the chain.
The block $s'$ must have been committed by at least $2f'+1$ replicas,
and since $Q$ has size at least $2f'+1$,
it must be that $f'+1$ replicas in $Q$ have also committed $s'$,
one of which is a correct node.
Thus, every replica upon receipt of $Q$ can figure out that the latest committed valid block number is $s'$.

Once $s'$ is known, a replica $i$ will check if block with sequence  $s'$ is the latest block in its history $h_i$,
and if it is, $i$ sends a confirmation message $s'_i$ to $p'$ (Algorithm \ref{ViewChange-Replica}).


In this case, at least $f'+1$ correct replicas know the latest block of $p'$ ($s'_{p'}$). If $ s'$  is same as $s'_{p'}$ then $p'$ begins updating all other replicas that have fallen behind (Algorithm ~\ref{ViewChange-Primary}). $p'$ will not send any block generated earlier than the water mark $H$ (Section \ref{section:Checkpoints}).
If $p'$ does not have $s'$ as its latest block then at least $f'+1$ correct replicas know about it and they send missing blocks and their respective $COMMIT$ messages to $p'$ and then $p'$ updates other replicas as described above. Once $p'$ has updated other replicas it will wait to receive at least $2f'+1$ correct replicas have sent confirmation $\  s'_i $ (Algorithm \ref{ViewChange-Primary}).


 
Since there are at least $2f'+1$ correct replicas, $p'$ signs the latest block in their histories that $p'$ has received using an aggregated signature $V \gets \bigcup_i \langle s'_i \rangle_{\sigma}$ and broadcasts it to the replicas. Upon receipt of $V$ each replica is now ready for the new epoch of the next block and is waiting to receive an $ORDER$ message from the new primary $p'$ (Algorithm \ref{ViewChange-Replica}).
In case a replica does not receive expected messages ($Q$, $V$ or blocks and their hashes within a certain expected time),
then it issues a new complaint which is processed similar to the other types of complaints as described above.

During the view change process  there may be some clients who send their request but it will not be processed because replicas are busy. To address this as we mentioned earlier the client $c$ will broadcast its request after epoch time $\Delta_1$, if it did not receive the response from $p$. In such case, all replicas receive the request $T_c$ and forward it to the $p$. Upon receipt of $2f'+1$ such forwarded requests,  $p$ considers $T_c$ to be included in the  $ORDER$ message as soon as possible. $p$ will have to propose those backlogged requests before proposing the new requests it receives. If it proposes a request that has not been seen by  $2f'+1$ replicas (of which $f'+1$ replicas are correct/honest replicas) proposing the backlogged transactions then the replicas can send a complaint which will result in a view change.

\begin{algorithm}[t]
\DontPrintSemicolon
%
%
%
\caption{Replica $i$ View Change}
\label{ViewChange-Replica}
Select new primary $p'=v \mod n$\;
Send $VIEWCHANGE$ containing latest local block number $s_i$ to $p'$\;
Receive aggregated $VIEWCHANGE$s $Q$ from $p'$\;
\If {$Q$ contains at least $2f'+1$ $VIEWCHANGE$s}{
Get the latest block number ($s'$)
that has been signed by at least $f'+1$ replicas in $Q$\;
\eIf {latest block $s_i$ in replica $i$ is same as $s'$}
{ Replica has not lost any block}
{
Receive messages (blocks and their respective hashes) up to $s'$ from $p'$ before timeout
}
Once updated ($s'_i = s'$) send $s'_i$ to $p'$ \;
Receive $V$ from $p'$ containing aggregated histories of at least $2f'+1$ replicas\;
}
\end{algorithm}

\begin{algorithm}[!htp]
\DontPrintSemicolon
\caption{New Primary View Change}\label{ViewChange-Primary}
Receive $VIEWCHANGE$ messages from replicas\;
Aggregate at least $2f'+1$ $VIEWCHANGE$ messages into $Q$\;
Broadcast $Q$ to replicas\;
Get the latest block number ($s'$) that has been signed by at least $f'+1$ replicas in $Q$\;

\eIf {latest block in $p'$ is same as $s'$} {New primary $p'$ has not lost any block}
{
Receive messages (blocks and their respective hashes) up to $s'$ from $f'+1$ replicas that are up to date 
}
Send messages with missing blocks and hashes to all replicas $i$ who have fallen behind, $s_i < s'$, where $s_i$ should not be less than latest water mark\;
Once received updated $s'_i$ from each replica $i$, where $s'_i = s'$, aggregate $s'_i$ into $V$\;
Broadcast $V$\;
\end{algorithm}

\section{Checkpoints}

\label{section:Checkpoints}
As an optimization to the protocol, we  use checkpoints to improve on the number of messages exchanged during view change.
Checkpoints are typically used as a way to truncate the log in other BFT-based protocols \cite{Castro:1999:PBF:296806.296824}. 
In addition to that, we can also use it to prevent malicious replicas from downloading older messages from a new primary $p'$ and delaying the completion of the view change process. As we know from Section \ref{Recovery mode}, some correct replicas might miss messages and go into recovery mode. These replicas need to download those missing messages. But  malicious replicas might try to download very old blocks and delay the view change process. To bound this we use checkpoints. To maintain the safety condition it is required that at least $2f'+1$ replicas agree on the checkpoint. The checkpoint is created after a constant number of blocks (e.g., sequence number divisible by 200). In Musch, replicas can agree on checkpoints during block agreement (checkpoint number to be added to the $RESPONSE$ message). A checkpoint that is agreed upon by $2f'+1$ replicas of which at least $f'+1$ are honest is called a stable checkpoint. Checkpoints have low and high watermarks. Low watermark $h$ is the last stable checkpoint and the high water mark $H$ is the sum of low water mark and $k$ number of blocks($H=k+h$),  where $k$ is large enough (i.e. $k=400$).
If a  replica wants to download a block older than $H$, $p'$ will ignore the download request and might think that the replica is maliciously trying to delay the view change process.

\section{Correctness Analysis}
\label{section:correctness}

In this section we provide proof of correctness and analysis of the Musch protocol. Before we proceed, it is important to define transaction completion and protocol correctness for the Musch protocol.
We say that a transaction $T_c$ issued by a client $c$ is considered to be completed by $c$ if $c$ receives at least $f'+1$ valid $\langle  REPLY,s, v, c , r ,t, i \rangle_{\sigma_i}$ messages.  It is guaranteed that upon receipt of $2f'+1$ $REPLY$ messages from different replicas at least $f'+1$ of them are valid.
We will prove that Musch satisfies the following correctness criteria:

\begin{definition}[Liveness] 
Every transaction proposed by the correct client will eventually be completed in finite time.
\end{definition}

\begin{definition}[Safety] 
A system is safe if a correct primary proposes a block of ordered transactions with block number $s$ and it is committed by at least $2f'+1$ replicas, then any block that has been committed earlier will  have smaller block number ($s' < s$) in the chain. Thus, block $B_{s'}$ will be the prefix of block $B_s$ in the chain.
Additionally the order of transactions within the block will remain identical in all correct replicas (due to Merkle tree\footnote{Merkle trees are hash-based data structures in which each leaf node is hash of a data block and each non leaf node is hash of its children. It is mainly used for efficient data verification.}).
\end{definition}

\subsection{Safety}
\begin{lemma}
\label{Lemma1}
Any two committed blocks $B_{s'}$ and $B_{s}$ must have a different block number.
\end{lemma}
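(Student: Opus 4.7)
The plan is to argue by contradiction using a standard quorum-intersection argument, split into a same-view case and a cross-view case.

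Assume, toward contradiction, that there exist two distinct committed blocks $B_{s'}$ and $B_{s}$ with the same sequence number, i.e.\ $s'=s$ but $B_{s'}\neq B_{s}$. By the normal-operation rule in Algorithms \ref{Algorithm2:Primary} and \ref{Algorithm1:Replica}, each commit required the aggregated signature $H$ of at least $2f'+1$ distinct replicas on a $RESPONSE$ message for that block. Let $R_{s'}$ and $R_{s}$ be those two signer sets. Since $|R_{s'}|+|R_{s}|\geq 2(2f'+1)=4f'+2$ and there are only $n=3f'+1$ replicas, $|R_{s'}\cap R_{s}|\geq f'+1$, so the intersection contains at least one correct replica $i$.

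First I would handle the case where both commits occur in the same view $v$. A correct replica $i$ only sends one $RESPONSE$ message $\langle RESPONSE,s,v,d,i\rangle_i$ per $(s,v)$ pair after verifying the primary's candidate $B$, so $i$ cannot have signed two distinct blocks with the same $(s,v)$; this immediately contradicts $i\in R_{s'}\cap R_{s}$.

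Next I would handle the cross-view case, say $B_{s'}$ committed in view $v_1$ and $B_{s}$ in a later view $v_2>v_1$. The key step is to argue inductively on $v_2-v_1$ that the view change of Section \ref{View Change} forbids a new primary $p'$ from re-proposing a different block at sequence $s$. The new primary $p'$ collects at least $2f'+1$ $VIEWCHANGE$ messages into $Q$; by the same quorum-intersection count, $Q$ shares at least $f'+1$ replicas with $R_{s'}$, hence at least one correct replica in $Q$ reports $B_{s'}$ as part of its committed history. Then $p'$ extracts the latest block number ``signed by at least $f'+1$ replicas in $Q$'' (per Algorithm \ref{ViewChange-Primary}), which must be $\geq s$, and propagates the corresponding blocks to every lagging replica. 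Consequently every correct replica in $R_{s}$ already has $B_{s'}$ at sequence $s$ when view $v_2$ begins, and by the verification in Algorithm \ref{Algorithm1:Replica} a correct replica will not sign a conflicting block at the same sequence number in $v_2$. This again contradicts the existence of a correct $i\in R_{s'}\cap R_{s}$.

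The main obstacle is the cross-view argument: I must be careful that the rule ``latest block signed by $f'+1$ replicas in $Q$'' actually forces $p'$ to surface every previously committed block (not merely the most recent one), which amounts to strengthening the statement into the induction hypothesis ``every block committed in any view $\leq v_1$ appears in every correct replica's chain at the start of $v_2$.'' The same-view part is essentially immediate from signature uniqueness; it is the interplay with $VIEWCHANGE$ aggregation, together with the window-based complaint mechanism not affecting which blocks get $2f'+1$ signatures, that needs the most careful bookkeeping.
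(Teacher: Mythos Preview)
Your opening paragraph---the quorum-intersection count yielding a correct replica in $R_{s'}\cap R_{s}$---is exactly the paper's entire proof of Lemma~\ref{Lemma1}. The paper then simply invokes the local invariant ``a correct replica only commits one block with a specific block number'' and closes with a one-line remark that the same reasoning applies in recovery mode. There is no case split on views.

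Everything you do after that first paragraph is extra relative to the paper. Your same-view case (uniqueness of a correct replica's $RESPONSE$ per $(s,v)$) is a sharper justification of the invariant the paper merely asserts. Your cross-view case, arguing via the $VIEWCHANGE$ aggregation in $Q$ that a previously committed block is surfaced at the new primary and propagated before any new proposal at the same sequence, is essentially the content of the paper's Lemma~\ref{Lemma3} (safety during view change), which the paper proves separately and only later combines with Lemma~\ref{Lemma1} in the Safety theorem. So your decomposition folds Lemma~\ref{Lemma3} into Lemma~\ref{Lemma1}, giving a more self-contained and more rigorous argument at the cost of duplicating work the paper defers. The ``main obstacle'' you flag---that the $f'+1$-signature rule in $Q$ must surface \emph{every} previously committed block, not just the latest---is real, and the paper does not spell it out any more carefully than you do; it is glossed over in both places.
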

\begin{proof}
Consider committed blocks $B_{s'}$ and $B_s$.
At least a set of $2f'+1$ replicas $S_1$ have agreed to all transactions with $B_{s'}$ and have committed it. Similarly, at least a set of $2f'+1$ replicas have agreed for the transactions in block $B_{s}$ and committed it. 
Since there are $3f'+1$ replicas, there is at least one correct replica (out of the at least $f'+1$ replicas in $S_1 \cap S_2$) that committed both for $B_{s'}$ and $B_{s}$. But a correct replica only commits one block with a specific block number. Thus, both blocks must have different numbers. The same mechanism applies during recovery mode.
\end{proof}

\begin{lemma}
\label{Lemma2}
If block $B_{s'}$ commits earlier than block $B_{s}$, then $B_{s'}$ has a smaller block number than $B_{s}$. 
\end{lemma}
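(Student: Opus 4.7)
The plan is to prove the lemma by contradiction combined with a standard quorum-intersection argument. Assume for contradiction that $B_{s'}$ commits strictly earlier than $B_s$ but $s' \ge s$. By Lemma~\ref{Lemma1} the two committed blocks must have distinct numbers, so the contradictory case actually reduces to $s' > s$. I would then exhibit a single correct replica that commits both blocks, and show that such a replica cannot commit a larger-numbered block before a smaller-numbered one.

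First I would set up the quorum intersection. Each committed block is acknowledged by a set of at least $2f'+1$ replicas: let $S_1$ be those committing $B_{s'}$ and $S_2$ those committing $B_s$. Since $n = 3f'+1$, we have $|S_1 \cap S_2| \ge f'+1$, and hence at least one correct replica $r \in S_1 \cap S_2$. By the temporal assumption, $r$ commits $B_{s'}$ before $B_s$.

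Next I would establish the key monotonicity property: a correct replica commits blocks in strictly increasing order of their block numbers. In normal operation this is immediate from Algorithm~\ref{Algorithm1:Replica}, since replica $i$ only accepts a block whose sequence is $s_i+1$ (and the special-case check triggers recovery otherwise). In recovery mode (Algorithm~\ref{Algorithm-3:recovery-i}) replica $i$ requests and commits the missing pairs $(B,H)$ starting from $l = s_i+1$ up to the current sequence, so commitments still occur in strictly increasing order. Finally, during view change (Algorithms~\ref{ViewChange-Replica} and~\ref{ViewChange-Primary}), the new primary $p'$ identifies the latest committed sequence $s'$ that has been signed by at least $f'+1$ replicas in $Q$ (which, as argued in the paper, is the genuinely latest committed block), updates lagging replicas with the missing blocks up to $s'$ in order, and only then initiates the next epoch with sequence $s'+1$. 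In all three regimes the local committed sequence number $s_i$ is non-decreasing at every commit step, so the order of commits at a correct replica matches the order of block numbers.

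Applying this monotonicity to the correct replica $r$: since $r$ commits $B_{s'}$ before $B_s$, we must have $s' < s$, contradicting $s' > s$. Hence the assumption fails and $s' < s$, as claimed. The main obstacle I expect is the monotonicity step spanning view changes and recovery; the subtlety is that one has to argue that the new primary will not cause a correct replica to commit a block with a sequence number smaller than one it has already committed, which follows from the fact that the recovered $s'$ is a lower bound for the genuinely latest committed block and from the enforcement of the high water mark $H$ described in Section~\ref{section:Checkpoints}.
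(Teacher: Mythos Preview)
Your proposal is correct and follows essentially the same approach as the paper's proof: both arguments obtain, via Lemma~\ref{Lemma1} (quorum intersection), a single correct replica that commits both blocks, and then appeal to the monotonicity of committed sequence numbers at that replica to conclude $s' < s$. Your treatment of monotonicity across normal, recovery, and view-change modes is considerably more detailed than the paper's, which simply asserts in one line that a correct replica only accepts $B_s$ if it is consistent with its local history.
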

\begin{proof}
As per Lemma \ref{Lemma1}, at least one correct replica $k$ has committed both $B_{s'}$ and $B_{s}$.
Suppose, that $B_{s'}$ gets a block number $s'$ 
which is smaller than the block number $s$ of $B_s$, that is $s'< s$ ($s \neq s'$ from Lemma \ref{Lemma1}).
A correct Replica $k$ will only accept $B_s$ if $B_s$ is consistent with its local history (only if $s>s'$ ).
\end{proof}


\begin{lemma}
\label{Lemma3}
Musch is safe during view change.
\end{lemma}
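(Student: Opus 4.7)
The plan is to show that the safety property in Definition 2 is preserved across a view change: every block committed before a view change keeps its sequence number in every correct replica that installs the new view, and no block committed after the view change can contradict an earlier committed one. I would fix an arbitrary block $B_{s^*}$ committed prior to the view change and reason about what the new primary $p'$ and the correct replicas collectively know once $Q$ and $V$ are installed.

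The central ingredient, which I would set up first, is a quorum-intersection argument on $Q$. Since $B_{s^*}$ was committed, there is a set $S$ of at least $2f'+1$ replicas whose chains contain $B_{s^*}$. The new primary aggregates a set $Q$ of at least $2f'+1$ signed $VIEWCHANGE$ messages, so $|S \cap Q| \geq (2f'+1)+(2f'+1)-(3f'+1) = f'+1$. Hence at least $f'+1$ entries of $Q$ report a latest local block number at least $s^*$, and the value $s'$ that every honest replica extracts from $Q$ via the rule ``latest sequence signed by $f'+1$ replicas in $Q$'' must satisfy $s' \geq s^*$. Because the individual $VIEWCHANGE$ messages inside $Q$ are signed by their senders (not by $p'$), a Byzantine $p'$ cannot fabricate $Q$, so every correct replica that accepts $Q$ derives the same $s'$.

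Next I would argue that $s'$ is genuinely committed and that every correct replica recovers its block. Since the $f'+1$ signers of the latest-block claim in $Q$ include at least one honest replica, block $s'$ carries a valid aggregated hash $H_{s'}$ from $2f'+1$ replicas. In the catch-up step of Algorithms~\ref{ViewChange-Primary} and~\ref{ViewChange-Replica}, any replica missing block $s'$ downloads the block and hash and rejects any substitute via signature verification, relying on the collision-resistance and unforgeability assumptions of Section~\ref{section:protocol}. If $p'$ deviates by withholding $Q$, $V$, or the catch-up blocks, or by sending inconsistent data, correct replicas time out, issue fresh complaints through their recovery windows, and trigger another view change; safety then follows by induction on the number of view-change retries.

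The main obstacle, as I see it, is ruling out equivocation by a Byzantine $p'$ that tries to present one suffix to one subset of replicas and a conflicting suffix to another, in order to split the network into two views that both move forward with inconsistent block numbers. The quorum-intersection argument on $Q$ is what defeats this: once every honest replica that installs the new view has agreed on the same $s'$ and the same aggregated history $h_{s'}$, any subsequent block proposed in the new view must extend $h_{s'}$ and carry a sequence number strictly greater than $s'$. Composing this with the monotonicity argument already established in Lemma~\ref{Lemma2} then yields the desired ordering $s^* \leq s' < s$ for every block $B_s$ committed after the view change, which is exactly Definition~2 applied across view boundaries.
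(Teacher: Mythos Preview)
Your proposal is correct and rests on the same core idea as the paper---using the at-least-$2f'{+}1$ set $Q$ of signed $VIEWCHANGE$ messages to recover the latest committed sequence $s'$ via a quorum-intersection argument, so that the new view extends (rather than contradicts) every previously committed block. Your treatment is in fact more explicit than the paper's own proof, which largely restates the mechanics of Algorithms~\ref{ViewChange-Replica} and~\ref{ViewChange-Primary}; in particular, your handling of a Byzantine new primary (unforgeability of the individual signatures inside $Q$, the equivocation case, and the induction on view-change retries) goes beyond what the paper spells out in Lemma~\ref{Lemma3} and is instead deferred there to the surrounding Safety theorem.
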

\begin{proof}
During a view change (Algorithms \ref{ViewChange-Replica} and \ref{ViewChange-Primary}), all replicas including the new primary $p'$ retrieve the latest history and block number $s'$ 
as at least $f'+ 1$ replicas will agree on the latest block number $s'$, which includes a correct replica that knows $s'$. 
All correct replicas know the latest block $s'_{p'}$ in the history of $p'$  from $\langle Q\rangle_{\sigma}$. If $s'_{p'}=s'$  then $p'$ begins updating all other replicas that have fallen behind in history, in other words it updates all the replicas that do not have blocks and respective $COMMIT$ messages up to $s'$. 
If $p'$ does not have $s'$ as its latest block then at least $f'+1$ correct replicas know about it (from $Q$) and they send missing blocks and their respective $COMMIT$ messages to $p'$ and then $p'$ updates other replicas as described above. Once $p'$ updated (receive blocks and $COMMIT$s up to $s'$) other replicas it will take $T$ timeout period to receive $s_i $ (update confirmation) from at least $2f'+1$ replicas. Then, $p'$ signs all their histories using an aggregated signature $V \gets \bigcup_i \langle s_i \rangle_{\sigma}$ and broadcasts it to the replicas. Upon receipt of $V$ each replica is now ready for the new epoch of the next block and is waiting to receive an $ORDER$ message from the new primary $p'$.
%
%
\end{proof}

\begin{theorem}[Safety]
Musch is safe.
\end{theorem}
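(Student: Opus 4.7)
The plan is to assemble the theorem directly from Lemmas~\ref{Lemma1}, \ref{Lemma2}, and \ref{Lemma3}, dispatching the three obligations in the safety definition in turn: distinct block numbers for distinct commits, monotonicity of block numbers with respect to commit order, and prefix-consistency across correct replicas (including the intra-block transaction order).

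First, I would pick any two committed blocks $B_{s'}$ and $B_s$ with $B_{s'}$ committed earlier. Lemma~\ref{Lemma1} immediately gives $s' \neq s$, and Lemma~\ref{Lemma2} gives $s' < s$. This already discharges the block-number part of the definition for the normal-operation case, since both lemmas argue via the quorum intersection of two $2f'+1$ sets yielding at least one correct replica that committed both.

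Next, for the prefix claim, I would appeal to the chained hash history $h_s = \mathit{Hash}(h_{s-1}, d)$ carried inside the $ORDER$ message: since at least one correct replica committed both $B_{s'}$ and $B_s$, and a correct replica only accepts a block whose $h_s$ is consistent with the hash of its locally committed prefix, block $B_{s'}$ must sit on the unique chain leading to $B_s$. The intra-block ordering part follows because $d = \mathit{hash}(L^a)$ is the Merkle root over the ordered transaction list, so two correct replicas committing the same $(s, d)$ necessarily hold the same $L^a$ in the same order; any deviation would be detected at the hash-verification step of Algorithm~\ref{Algorithm1:Replica}.

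Finally, I would cover the case where some of these commits straddle a view change. Here Lemma~\ref{Lemma3} is the workhorse: it guarantees that when the new primary $p'$ takes over, it (and every correct replica) learns the latest committed block number $s'$ and brings itself into alignment with at least one correct replica's history before any new block with a larger sequence number can be committed. Composing this with the normal-operation argument above, no view change can introduce a commit that violates the ordering or prefix property. The main obstacle I anticipate is being careful at precisely this seam between Lemmas~\ref{Lemma1}--\ref{Lemma2} (which implicitly assume single-view reasoning) and Lemma~\ref{Lemma3}; a clean argument will observe that any committed block, regardless of the view in which it was committed, carries a $2f'+1$ quorum, so the quorum-intersection argument used in the earlier lemmas goes through unchanged across views, letting the theorem follow by a uniform induction on commit order.
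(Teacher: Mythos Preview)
Your proposal is correct and follows essentially the same approach as the paper: the paper's proof simply invokes Lemmas~\ref{Lemma1}, \ref{Lemma2}, and \ref{Lemma3} to cover normal, recovery, and view-change modes, adding only a brief remark that if the new primary $p'$ is itself faulty, safety still holds because eventually a correct primary is chosen. Your write-up is in fact more explicit than the paper's---you spell out the hash-chain argument for the prefix property and the Merkle-root argument for intra-block order, and your quorum-intersection-across-views observation handles the faulty-$p'$ case more cleanly than the paper's liveness-flavored ``eventually a correct primary'' remark.
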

\begin{proof}
Lemma \ref{Lemma3} guarantees safety when the new primary $p'$ is correct.
If $p'$ is not correct, safety will be guaranteed when eventually a correct 
primary will be chosen.
Therefore, based on Lemmas \ref{Lemma1}, \ref{Lemma2} and \ref{Lemma3}, Musch is safe when replicas are either in normal, recovery, or view change mode.
\end{proof}

\subsection{Liveness}
In this section we provide a proof for liveness of Musch.
\begin{lemma} \label{Normal Liveness}
Musch satisfies liveness when the primary is correct.
\end{lemma}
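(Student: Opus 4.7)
The plan is to trace a transaction $T_c$ issued by a correct client $c$ through the protocol and show it is completed in finite time whenever the primary $p$ is correct. By the partial synchrony assumption, after global stabilization all messages between correct parties are delivered within the bounded delay $T$, so I only need to argue that the protocol logic drives $T_c$ to commitment and that Byzantine replicas cannot block progress when $p$ is honest.

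First, I would show that $T_c$ reaches $p$. Either $c$'s direct submission arrives at $p$, or $c$ times out after $\Delta_1$ without receiving $f'+1$ REPLY messages, in which case $c$ broadcasts $T_c$ to all replicas. Since at least $2f'+1$ replicas are correct, they forward $T_c$ to $p$, and upon receiving $2f'+1$ such forwards the protocol compels the correct $p$ to place $T_c$ in an upcoming ORDER message (the backlogged-requests rule forces correct $p$ to propose $T_c$ before subsequent fresh requests). Next, once $p$ broadcasts the candidate block $B$ containing $T_c$, each of the $\geq 2f'+1$ correct replicas validates $B$ and returns a signed hash $H_i$; the correct $p$ aggregates $2f'+1$ such responses into $\sigma$ and broadcasts $\langle COMMIT, H\rangle_\sigma$. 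Every correct replica then commits $(B,H)$ and sends REPLY to $c$, so $c$ eventually receives at least $2f'+1$ REPLY messages of which at least $f'+1$ come from correct replicas, satisfying the completion criterion.

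The main obstacle, and the step that needs the most care, is ruling out a spurious view change that could halt progress even though $p$ is correct. By Algorithm~\ref{Window Nodes} a view change fires only when a correct window node collects $f'+1$ distinct COMPLAIN messages, or receives a valid PROOF against $p$. Since $p$ is correct, no PROOF exists, and since only $f' < f'+1$ replicas are Byzantine, they cannot on their own produce the required $f'+1$ complaints. Thus I must argue that no correct replica will be forced to complain. The only way a correct replica could complain is if it fails to receive an expected ORDER or COMMIT within $\Delta_1$; in that case it enters recovery mode and consults the window sequence $W_1, W_2, \ldots, W_k$. Because at least one window $W_j$ with $j\le \lceil\lg(f+1)\rceil$ contains a correct node, and that node either already holds the missing blocks or can obtain them via its own recovery, the replica receives the missing messages within the bounded waiting budget $\Lambda_j = j\cdot 3T + 6T$ and resumes normal operation without ever broadcasting a COMPLAIN to all replicas against $p$.

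Finally, I would assemble these pieces: the forwarding and backlog rules ensure $T_c$ is eventually proposed by the correct $p$; the presence of $\geq 2f'+1$ correct replicas ensures the quorum for both RESPONSE and COMMIT is reached; the exponentially expanding windows ensure that transient message loss for any correct replica is repaired within $O(T\log f')$ time rather than escalating to a view change; and partial synchrony ensures all of these bounded delays eventually take effect. Combining these, $c$ receives the required $f'+1$ valid REPLY messages in finite time, establishing liveness.
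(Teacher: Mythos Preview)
Your argument is correct, but it is considerably more elaborate than the paper's. The paper's proof is a three-line trace of the happy path: a correct primary running Algorithm~\ref{Algorithm2:Primary} receives at least $2f'+1$ $RESPONSE$ messages from correct replicas, aggregates them into the $COMMIT$, broadcasts it, and every correct replica commits and sends $REPLY$ to the client. The paper does not discuss client forwarding, recovery, or the possibility of a spurious view change at all in this lemma; those concerns are deferred to Lemmas~\ref{f'+1 complains}--\ref{liveness view change}.

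Your extra machinery is not wrong, but one step deserves tightening. You write that a correct replica in recovery ``resumes normal operation without ever broadcasting a $COMPLAIN$ to all replicas against $p$.'' That is true in the narrow sense, but the replica \emph{does} send a $COMPLAIN$ to window nodes, and Algorithm~\ref{Window Nodes} counts those toward the $f'+1$ threshold. So the real reason a spurious view change cannot derail liveness here is the one you state at the outset: after global stabilization, a correct primary's $ORDER$ and $COMMIT$ arrive within $\Delta_1$, so correct replicas never enter recovery in the first place, and the at most $f'$ Byzantine complaints fall short of $f'+1$. The window-recovery discussion is then superfluous for this lemma (though it is relevant elsewhere in the paper). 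If you drop that paragraph, your proof collapses to essentially the paper's argument plus an explicit handling of the client-forwarding path, which is a reasonable and slightly more careful variant.
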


\begin{proof}
Consider a correct primary $p$ that executes Algorithm \ref{Algorithm2:Primary},
and also the replicas that execute Algorithm \ref{Algorithm1:Replica}.
Primary $p$ receives at least $2f'+1$ correct $RESPONSE$ messages from replicas,
aggregates and signs them using an aggregation signature $\sigma$. It then broadcasts the signed $COMMIT$ message to all replicas. Upon receipt of the $COMMIT$ message each replica will commit the block. The primary $p$ along with all correct replicas also forwards a reply message to each client $\langle  REPLY,s, v, c , r ,t, i \rangle$ and clients will mark the transaction as completed. 
\end{proof}



%
%

\begin{lemma}\label{f'+1 complains}
If there are $f'+1$ complaints, or there is a complaint with a proof of maliciousness against the primary, then a view change will occur.
\end{lemma}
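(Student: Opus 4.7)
The plan is to handle the two triggering conditions of the lemma separately and, in each case, show that the window-based dissemination of Section~\ref{Recovery mode} forces some correct window node to execute the view-change branch of Algorithm~\ref{Window Nodes}. Once that happens, the broadcast issued by that window node causes every correct replica to invoke Algorithm~\ref{ViewChange-Replica} through the check in Algorithm~\ref{Algorithm1:Replica}.

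First I would dispatch the \emph{PROOF} case. A correct replica $i$ holding a proof of maliciousness against $p$ attaches \emph{PROOF} to its \emph{COMPLAIN} and, per Algorithm~\ref{Algorithm-3:recovery-i}, sends it to the current window $W_j$, doubling $j$ whenever no useful response arrives within the timeout $\Lambda_j$. By the window construction, the union $W_1\cup\cdots\cup W_k$ with $k=\lceil\lg(f+1)\rceil$ is large enough to contain at least one correct replica; therefore the escalation (or, if $i$ is itself a window node, $i$'s own broadcast upon reaching its own window) eventually delivers the valid \emph{PROOF} to some correct window node. That correct node then follows the \emph{PROOF} branch of Algorithm~\ref{Window Nodes}, broadcasts \emph{PROOF} to every replica, and itself executes Algorithm~\ref{ViewChange-Replica}.

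For the case of $f'+1$ complaints I would argue analogously, but via the accumulated set $S$. Each complaining replica runs Algorithm~\ref{Algorithm-3:recovery-i} and escalates geometrically. A complainer only stops early if it receives valid recovery information that resolves its missing entry; but such information cannot silence \emph{all} $f'+1$ complaints unless either the missing blocks are ultimately committed (in which case the complaints never needed to produce a view change) or a view change has already been triggered (done). Otherwise every still-outstanding complaint eventually reaches the first correct window node in its escalation chain, which therefore collects $f'+1$ distinct signed complaints in its set $S$. By Algorithm~\ref{Window Nodes}, once $|S|\ge f'+1$ this node broadcasts $S$ and initiates view change, and the check in Algorithm~\ref{Algorithm1:Replica} propagates this to every correct replica.

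The hard part will be rigorously pinning down the convergence of the escalation so that the $f'+1$ complaints truly accumulate at a common correct node rather than being scattered across different windows. I expect to handle this by induction on the window index $j$: either the view change is triggered at some $W_{j'}\le W_j$ (and we are done), or the Byzantine nodes in $W_1,\ldots,W_j$ have refused to help, in which case every still-unsatisfied correct complainer has escalated past $W_j$. Since escalation terminates by $W_k$, at which point a correct window node is guaranteed, all outstanding complaints must converge there. A minor bookkeeping issue I would flag explicitly is the case of a complainer that is itself a window node and therefore broadcasts instead of forwarding; this only strengthens the argument, since the broadcast is received by every correct window node simultaneously.
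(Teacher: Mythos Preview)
Your proposal is correct and follows essentially the same approach as the paper: both argue that the geometric window escalation of Algorithm~\ref{Algorithm-3:recovery-i} must reach a correct window node by level $k=\lceil\lg(f+1)\rceil$, at which point the relevant branch of Algorithm~\ref{Window Nodes} fires (broadcast of $S$ with $|S|\ge f'+1$, or broadcast of \emph{PROOF}), and this triggers Algorithm~\ref{ViewChange-Replica} everywhere. The paper's own proof is considerably terser and simply \emph{asserts} that a correct window node in $W_k$ will have accumulated the $f'{+}1$ complaints; your induction on the window index and your explicit handling of complainers that are silenced early by recovery information are refinements the paper does not spell out, so if anything your write-up is more careful than the original on the convergence point you flag as ``the hard part.''
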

\begin{proof}
Algorithm \ref{Algorithm-3:recovery-i} guarantees that, in the worst case, a replica $i$ can find  a window node $W_{k}$ to complain, where,  $k = \lceil \lg(f + 1) \rceil$ and $W_{k}$ contains at least one correct replica,
since $W_k$ contains at least $2^{\lg(f+1)} = f+1$ nodes.
Observe that once a replica $i$ has found a honest window node, it is guaranteed that the honest node will reply to its valid complaint either by sending back blocks and $COMMIT$s or if the number of complaints are greater than $f'$, then the window node will broadcast all complaints to the network causing a view change (Algorithm \ref{Window Nodes}). 

If a replica $j \in W_{k}$ receives at least $f'+1$ complaints from other replicas it triggers a view change according to the Algorithm \ref{Window Nodes}. Since $f'+1$ complaints are received, this guarantees that at least one honest replica has complained.

Similarly, $j$ may receive an explicit proof that the primary $p$ is faulty
($p$'s history is incorrect, or it has proposed an invalid transaction, etc.). In such a case only one complaint is needed to prove that $p$ is malicious and a view change will be triggered.
\end{proof}
 

\begin{lemma} \label{liveness malicious primary}
If a transaction is not completed then a view change will occur.
\end{lemma}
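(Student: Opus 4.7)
The plan is to show that if a transaction $T_c$ from a correct client $c$ fails to complete, then a sufficient number of correct replicas will register complaints at some correct window node, at which point Lemma~\ref{f'+1 complains} delivers the view change. The contrapositive of Lemma~\ref{Normal Liveness} tells us that after the global stabilization delay a non-completing transaction forces the current primary $p$ to be faulty, so the argument reduces to showing that faulty behaviour of $p$ is detectable and actionable under the window mechanism.

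First I would enumerate the failure modes of $p$ relative to $T_c$. If $p$ never orders $T_c$ (either because it ignores $c$ or because it never receives the request), the client rebroadcast described at the end of Section~\ref{View Change} forces $2f'+1$ correct replicas to forward $T_c$ to $p$; once $p$ still fails to propose it within the $\Delta_1$ expected-message budget, each such correct replica detects the omission and enters Algorithm~\ref{Algorithm-3:recovery-i} with a $Complain$ message. If $p$ proposes $T_c$ but does not finalize with a valid $COMMIT$ within $\Delta_3$, the same timeout path fires. Finally, if $p$ sends observably inconsistent messages, any correct replica that witnesses this attaches a $PROOF$ to its complaint, which by Lemma~\ref{f'+1 complains} alone suffices to trigger a view change.

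Next I would argue that enough correct replicas complain. Since $T_c$ is not completed, $c$ receives fewer than $f'+1$ valid $REPLY$ messages; as only correct replicas produce valid replies and they only do so after a successful $COMMIT$, at most $f'$ correct replicas have committed the block containing $T_c$. Out of the $2f'+1$ correct replicas, this leaves at least $f'+1$ correct replicas that are stuck at a missing $ORDER$ or $COMMIT$ and therefore timeout and invoke Algorithm~\ref{Algorithm-3:recovery-i}. Because all replicas share the same fixed sequence of windows $W_1, \ldots, W_{k}$ with $k = \lceil \lg(f+1) \rceil$, the doubling procedure drives every complaining correct replica into the first window that actually contains a correct window node; that correct window node, being shared by all complainers that escalate past the preceding all-faulty windows, accumulates at least $f'+1$ distinct complaints in its set $S$ in Algorithm~\ref{Window Nodes}.

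Finally I would invoke Lemma~\ref{f'+1 complains} on that correct window node to conclude that the $COMPLAIN$ (or $PROOF$) broadcast is issued and Algorithm~\ref{ViewChange-Replica} is entered, i.e.\ a view change occurs. The main obstacle I anticipate is the bookkeeping in the third paragraph: showing that the complaints actually aggregate at a single correct window node rather than being scattered across different nodes whose individual counts never reach $f'+1$. This requires carefully using the fact that the window schedule is deterministic and that a complainer only stops escalating once it receives a useful response, so the first correct window node reached is identical across complainers at any given escalation depth.
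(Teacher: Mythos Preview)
Your proposal follows essentially the same route as the paper: client rebroadcast after $\Delta_1$, replicas forward $T_c$ to the primary and complain if it is not honored, the invalid-\textsc{COMMIT} case supplies a \textsc{PROOF}, and Lemma~\ref{f'+1 complains} delivers the view change. The paper's own proof is considerably terser---it simply asserts that replicas ``will start complaining'' and invokes Lemma~\ref{f'+1 complains} without explicitly verifying the $f'+1$ threshold or the aggregation at a single correct window node---so your third and fourth paragraphs go beyond what the paper actually argues. One small caution: the counting argument in your third paragraph (``at most $f'$ correct replicas have committed the block containing $T_c$'') tacitly presumes $T_c$ was placed in some block; the case where $p$ never orders $T_c$ at all is really handled by your second paragraph (all $2f'+1$ forwarding correct replicas detect the omission), so make sure the case split is explicit rather than letting paragraph three appear to cover everything.
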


\begin{proof}
If a transaction does not complete after sufficient time $\Delta_1$, then the client $c$ broadcasts its transaction $T_c$ to the replicas. Upon receipt of $T_c$, the replicas check if they have already committed a block that contains $T_c$. If they did, each replica $i$ will send $\langle ACK, T_c\rangle_i$ to the client and upon receipt of $2f'+1$ $ACK$ messages the client will consider the transaction as complete. If primary $p$ has not proposed the transaction $T_c$, then each replica will forward $T_c$ to $p$ and will expect that $p$ will include it in the next $ORDER$ message (during normal operation). If $p$ does not include it in the next $ORDER$ message, then replicas will start complaining, which will result in a view change (if at least $f'+1$ replicas complain, from Lemma \ref{f'+1 complains}).

Another case that can prevent a request from being committed is when  
replicas receive a $COMMIT$ message signed by less than $2f'+1$ replicas. In this case, this can be used as proof against $p$ and a complaint can be made, which will result in a view change (Lemma \ref{f'+1 complains}).
\end{proof}

\begin{lemma} \label{liveness view change}
Musch satisfies liveness even if a client request is received during a view change.
\end{lemma}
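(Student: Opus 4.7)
The plan is to reduce the case where a client request arrives mid view change to the cases already handled by Lemmas \ref{Normal Liveness}, \ref{f'+1 complains}, and \ref{liveness malicious primary}. The key observation is that a view change is a finite procedure under partial synchrony: once global stabilization is reached, the new primary $p'$ collects $2f'+1$ $VIEWCHANGE$ messages into $Q$, broadcasts $Q$, synchronizes histories, and finally broadcasts $V$, all within bounded time. So the client request either gets processed before the view change starts, after it completes, or its processing deadline $\Delta_1$ straddles the view change.

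First I would argue that in all of these cases the client rebroadcast mechanism eventually delivers $T_c$ to the new primary's ordering pipeline. If the client $c$ issues $T_c$ and does not receive $f'+1$ valid $REPLY$ messages within $\Delta_1$, it broadcasts $T_c$ to all replicas, as described in the protocol. Any correct replica that is still executing view change steps will, upon finishing the view change (receiving $V$), forward $T_c$ to the new primary $p'$. Hence after the view change terminates, $p'$ will receive at least $2f'+1$ forwarded copies of $T_c$ from correct replicas.

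Next I would split on whether $p'$ is correct. If $p'$ is correct, then by the protocol $p'$ must include $T_c$ in its next $ORDER$ message (after draining any legitimate backlog of earlier broadcasted requests); then by Lemma \ref{Normal Liveness} the block containing $T_c$ is committed and the client receives at least $f'+1$ valid $REPLY$ messages, so $T_c$ completes. If $p'$ is faulty, either it never orders $T_c$, or it orders it inconsistently, or it fails to gather $2f'+1$ $RESPONSE$s. In each sub-case, the preconditions of Lemma \ref{liveness malicious primary} are met: the transaction fails to complete, so correct replicas issue $COMPLAIN$ messages through the window mechanism, and by Lemma \ref{f'+1 complains} a further view change is triggered. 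Since there are at most $f' < n/3$ faulty replicas and new primaries are chosen in round-robin as $p' = v \bmod n$, eventually a correct primary is elected, reducing to the previous sub-case and completing $T_c$ in finite time.

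The main obstacle I expect is ruling out pathological interleavings where the request keeps arriving during successive view changes so that it is never ordered. I would handle this by using the bounded duration of each view change (the timeouts $\Delta_1$, the window timeouts $\Lambda_j$, and the $T$-bounded message delay under partial synchrony) together with the fact that at most $f'$ distinct faulty primaries can be chosen before a correct one appears. Thus the number of view changes before $T_c$ can be successfully ordered is bounded, each takes bounded time, and so $T_c$ completes in finite time, establishing liveness.
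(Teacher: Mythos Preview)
Your proposal is correct and follows essentially the same approach as the paper: the client rebroadcasts $T_c$ after the $\Delta_1$ timeout, correct replicas forward it to the new primary $p'$, and if $p'$ fails to order the backlogged request appropriately then complaints trigger another view change via Lemma~\ref{f'+1 complains}. Your treatment is in fact more thorough than the paper's, since you explicitly supply the termination argument (round-robin primary selection exhausts the at most $f'$ faulty candidates, and each view change has bounded duration under partial synchrony) that the paper's proof leaves implicit.
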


\begin{proof}
During the view change process, there may be some clients who send their request for transaction $T_c$ but it will not be processed because replicas are busy with the view change. To address this, as mentioned earlier the client $c$ will broadcast its request after epoch timeout $\Delta_1$, if it did not receive a response from $p$. In such a case, all replicas receive the request $T_c$ and forward it to the new primary $p'$. Upon receipt of $2f'+1$ such forwarded requests the $p'$ considers $T_c$ to be included in the $ORDER$ message as soon as possible. The new primary $p'$ will have to propose those backlogged client requests during the view change, before proposing the new requests it receives. If it proposes a request that has not been seen by $2f'+1$ replicas (of which $f'+1$ replicas are correct/honest replicas), proposing the backlogged transactions then the replicas can start complaints, which will result in a new view change (Lemma \ref{f'+1 complains}).
\end{proof}

\begin{theorem}[Liveness]
\label{thrm:liveness}
Musch satisfies liveness and all correct transactions will be completed eventually.
\end{theorem}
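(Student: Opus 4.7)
The plan is to combine Lemmas \ref{Normal Liveness}--\ref{liveness view change} into a case analysis on the behavior of the current primary, using the partial synchrony assumption to ensure that timeouts eventually fire correctly. After the unknown global stabilization time, every message is delivered within the bound $T$, so the epoch timeout $\Delta_1 = 3T$ and the escalating window timeouts $\Lambda_j = 3jT + 6T$ are meaningful, and a correct replica never spuriously accuses a correct primary.

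Consider a client $c$ whose transaction $T_c$ must complete. If the current primary $p$ is correct, Lemma \ref{Normal Liveness} immediately yields completion. If instead $T_c$ fails to complete within $\Delta_1$, the client rebroadcasts $T_c$ to the replicas as described in Algorithm \ref{Algorithm1:Replica}; by Lemma \ref{liveness malicious primary}, either $T_c$ is eventually included and committed, or a view change is triggered. The view change itself is initiated through Lemma \ref{f'+1 complains}: once a correct replica fails to receive an expected $ORDER$ or $COMMIT$, it escalates its complaint through windows $W_1, W_2, \ldots, W_k$ with $k = \lceil \lg(f+1) \rceil$, and Algorithm \ref{Window Nodes} guarantees that the first correct window node to collect $f'+1$ complaints (or a single $PROOF$) broadcasts them and forces a view change.

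I would then argue that view changes terminate in finite time. The new primary $p' = v \bmod n$ must produce $Q$ and $V$ within bounded timeouts, and if $p'$ is itself faulty, replicas will issue fresh complaints, which by Lemma \ref{f'+1 complains} drives the system into yet another view change. Lemma \ref{liveness view change} ensures that $T_c$ is not lost across any of these transitions: backlogged client requests are forwarded to each successive $p'$ and must appear in the next $ORDER$ message, or else the primary is evicted. Since at most $f'$ of the $n = 3f'+1$ replicas are Byzantine and view numbers rotate deterministically, after at most $f'+1$ consecutive view changes a correct primary is installed, and Lemma \ref{Normal Liveness} then delivers $T_c$ to $c$.

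The main obstacle I expect is giving a clean finite bound on the waiting time rather than simply arguing "eventually." I would handle this by observing that each view change round, including the window escalation, costs at most $O(T \log f')$ of real time, and because at most $f'$ faulty primaries can precede a correct one in the rotation, the total latency before $T_c$ completes is at most $O(f' T \log f')$. This is finite, which is all that the liveness definition requires, so the theorem follows by the three case reductions above.
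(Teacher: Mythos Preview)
Your proposal is correct and follows essentially the same approach as the paper: both combine Lemmas \ref{Normal Liveness}, \ref{liveness malicious primary}, and \ref{liveness view change} (with Lemma \ref{f'+1 complains} as the underlying driver of view changes) into a case analysis on whether the current primary behaves correctly. The paper's own proof is a single sentence citing those lemmas, whereas you add an explicit rotation argument (at most $f'+1$ consecutive view changes before a correct primary is installed) and a quantitative $O(f' T \log f')$ latency bound that the paper does not state; these additions are sound and sharpen the argument, but the underlying decomposition is identical.
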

\begin{proof}
Based on Lemmas \ref{Normal Liveness}, \ref{liveness malicious primary} and \ref{liveness view change}, any correct transaction request by a client will be completed within a finite period of time.
\end{proof}

\section{Communication Complexity}
\label{section:complexity}

In communication complexity, we count all messages that cause a reaction in our algorithm and we refer to these as {\em effective messages}.
In contrast, there are {\em ineffective} messages,
which have sources that have been identified as malicious,
and so the recipient can ignore these messages.
We will measure the number of effective messages exchanged in an epoch,
and we will consider worst cases scenarios, with or without view change.
In other words, we consider worst-case {\em performance attacks} when malicious replicas attempt to increase the
communication of the protocol by causing messages to be sent from correct replicas.

In the communication complexity we consider separately the messages sent between clients and replicas, and those sent only between replicas.

\subsection{Client-Replica Communication Complexity} \label{Replica-Client complexity}
If a client sends a transaction to the primary $p$,
and does not receive a response from the primary $p$ within $\Delta_1$, 
then the client broadcasts to the primary $p$ (a broadcast involves $n$ messages). 
Upon receipt of a broadcast from a client, if replica $i$ has already processed the client's transaction it will answer to the client with an acknowledgement.
If not, the replica $i$ will forward the client's request to the primary, forcing it to process it as soon as possible.
The liveness property of our algorithm, Theorem \ref{thrm:liveness},
will guarantee that eventually at least $2f'+1 = O(n)$ of the replicas will 
send acknowledgements to the client.
Therefore, we get the following result:

\begin{lemma}
\label{lemma:client-messages}
For each transaction sent by a client, at most $O(n)$ messages will 
be exchanged between the client and the replicas in order to process the transaction (i.e., include the transaction in a block).
\end{lemma}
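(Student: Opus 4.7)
The plan is to decompose the life cycle of a single client transaction $T_c$ into phases and count only messages whose source or destination is the client. Phase 1 is the initial submission from the client to the primary $p$, contributing one message. Phase 2 occurs only if no $REPLY$ arrives within the epoch timeout $\Delta_1$: the client broadcasts $T_c$ to all $n$ replicas, contributing exactly $n$ messages. Phase 3 is the eventual response: once $T_c$ has been included in a committed block, each of the at least $2f'+1$ correct replicas sends back one $REPLY$ (or an $ACK$ if it had already processed $T_c$ when it received the client broadcast), contributing at most $n$ client-bound messages. Summing gives $1 + n + n = O(n)$.

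The second step is to justify that Phase 2 need happen only a constant number of times. After the single broadcast, each replica that has not yet committed $T_c$ retains the request and forwards it to the current primary, and continues re-forwarding it to each newly elected primary after any view change; this is exactly the persistence mechanism described in Lemma~\ref{liveness view change}. Consequently the client is not required to rebroadcast after subsequent view changes, and all further progress toward commitment is driven by inter-replica traffic which is not counted toward this lemma.

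The main obstacle I expect is ruling out the scenario in which the client collects fewer than the required $f'+1$ valid $REPLY$ messages in Phase 3 and is therefore tempted to initiate another broadcast round, which would (naively) multiply the count by the number of primary failures. I would address this by appealing directly to Theorem~\ref{thrm:liveness}: once $T_c$ commits — guaranteed in finite time — at least $2f'+1$ correct replicas will send $REPLY$ messages to the client, of which at least $f'+1$ are valid, satisfying the completion condition stated at the start of Section~\ref{section:correctness}. With this in hand, the three phase counts above compose to yield the claimed $O(n)$ bound on client-replica messages per transaction.
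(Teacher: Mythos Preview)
Your proposal is correct and follows essentially the same approach as the paper: decompose into the initial send to the primary, the single client broadcast after timeout $\Delta_1$, and the at most $n$ replies/acknowledgements, then invoke liveness (Theorem~\ref{thrm:liveness}) to guarantee the transaction eventually commits so that $2f'+1=O(n)$ acknowledgements arrive. Your treatment is in fact slightly more careful than the paper's, since you explicitly argue that the client need broadcast only once (replicas persist and re-forward the request across view changes), a point the paper leaves implicit.
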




\subsection{Replica-Replica Communication Complexity}
In this section we analyze the communication complexity of the consensus engine of our protocol, which includes the primary $p$ and the replicas (in total $n$ nodes).
A malicious primary $p$ and malicious replicas 
both can try to increase the communication complexity.

\subsubsection{Messages caused by malicious primary}
Let $R_c$ be the set of replicas that complain.
First, we examine the case when the nodes in $R_c$ did not receive the block or $COMMIT$ message and they complain. A malicious primary $p$ can afford not to send such messages up to at most $f'$ replicas, without getting caught as being malicious; that is, $|R_c| \leq f'$.

In this case, each of the complainers in $R_c$ may have to communicate with up to $2f+1$ window nodes, since this guarantees a window that has at least one correct window node. This gives at most $(2f+1)|R_c|$ messages.
In the worst case, out of the $2f+1$ window replicas at most $f+1$ will be the honest ones that will broadcast to all $n$ replicas and will receive their response, to be forwarded to the complainers $R_c$, giving at most $2(f+1)n + (f+1)|R_c|$ additional messages. The total communication complexity in this case will be (since $|R_c| \leq f' < n/3$):

\begin{equation}\label{eqn:primary-msg}
\begin{split}
(2f+1)|R_c| +2 (f+1) n + (f+1) |R_c| \\ \leq (5f + 4)n = O(fn+n). 
\end{split}
\end{equation}

\subsubsection{Messages caused by malicious replicas}
Suppose the set of complainers $R_c$ are malicious, 
thus, $|R_c| \leq f$.
Window nodes do not respond to repetitive complains from the same replica
(non-effective messages), which prevents malicious replicas from increasing the communication complexity.
Nevertheless,
each window node may respond once to each malicious request.
A window node $j$ can respond to a complain message in the following ways:
\begin{itemize}
\item 
If window node $j$ has the appropriate response to the complain (i.e. it has the block or $COMMIT$) it will send it back to the replica that complained.
At most $2f'+1$ window nodes will be accessed by each replica in $R_c$, since this is the bound on the total number of window nodes.
Therefore, in this case, the number of messages are at most:
\begin{equation}\label{eqn:replica-msg1}
\begin{split}
2 (2f'+1)|R_c| \leq (4f'+2)f \\ < (4n/3 + 2)f = O(fn+n).
\end{split}
\end{equation}

\item
If window node $j$ does not have the appropriate response (block or $COMMIT$),
then $j$ itself is also executing the window protocol from smaller to larger windows,
and when it eventually points to its own window, it will broadcast the complaint to get a response from other replicas (acting as a regular window node).  This scenario can only happen if all the previous windows are populated by $f$ faulty nodes. 
The number of complaints from $R_c$ to up to $2f'+1$ window nodes 
are bounded by $(2f'+1)|R_c|$.
Similarly,
the respective responses are bounded by $(2f'+1)|R_c|$.
For calculating the messages from the broadcasts,
out of the $2f'+1$ total window nodes,
at most $2f+1$ window nodes will react to the received complaints with broadcasts,
since the first encountered window of size at least $f+1$ will respond to any complaint from a valid node.
Thus, each of the up to $2f+1$ windows nodes broadcasts to all replicas,
causing $(2f+1)n$ additional messages.
Therefore, in this case, 
the number of messages are at most:
\begin{equation}\label{eqn:replica-msg2}
\begin{split}
2(2f'+1)|R_c|+ (2f+1)n \\ < (4n/3 + 2) f + (2f+1)n = O(fn+n).
\end{split}
\end{equation}
\end{itemize}

\subsection{View Change Communication Complexity}
When a correct window node receives $f'+1$ complaints it will broadcast all of them to all replicas ($n$ messages). 
There are at most $f+1$ window nodes that will broadcast (since those window nodes could be correct in the last accessed window size),
resulting to at most $(f+1)n$ messages.
Upon receipt of the broadcast message,
each replica begins the view change process. The replica sends back a $VIEWCHANGE$ message to the new primary $p'$ which also includes its history ($n$ messages). The new primary $p'$ aggregates all $VIEWCHANGE$ messages into $\langle Q \rangle_{\sigma_p'}$ and broadcasts ($n$ messages). Upon receipt each replica extracts the most recent block as described in Section \ref{View Change}.
Therefore, the number of messages from this part of the algorithm is at most: 
\begin{equation}\label{eqn:viewchange-msg1}
(f+1)n+n+n= fn + 3n.
\end{equation}
During this, at least $f'+1$ correct replicas have the latest committed block $s'$, and this block is chosen as the starting point for the next epoch, which will build another block ($s'+1$) over it. All other replicas that have block number less than $s'$ as their latest block have to download all the blocks up to $s'$ from $p'$. If $p'$ does not have $s'$ as its latest block, then $f'+1$ replicas that have it will bring $p'$ up to date.
Thus, if $f'+1$ replicas have $s'$ as their latest block, then, at most $2f'$ replicas in the worst case get (download) messages up from the high water mark in checkpoint $H$ to $s'$. Let $e$ be the number of committed blocks from $H$ to $s'$. For each committed block we need two messages, first the block itself and the second is the $COMMIT$ message. 
Thus, we have:
\begin{equation}\label{eqn:viewchange-msg2}
2e (2f'+f') = 6ef'.
\end{equation}
Assuming frequent checkpoints (say every a fixed number of blocks), 
we can assume that $e$ is a constant.
From Equations \ref{eqn:viewchange-msg1} and \ref{eqn:viewchange-msg2} we have
for the total number of messages in view change:
\begin{equation}\label{eqn:viewchange-msg-final}
fn + 3n + 6e(n/3) = O(fn+n).
\end{equation}

\subsection{Overall Messages}
Combining Equations \ref{eqn:primary-msg}, \ref{eqn:replica-msg1}, \ref{eqn:replica-msg2}, we obtain $O(fn+n)$ communication complexity in a single epoch
for the communication complexity between replicas.
From Equation \ref{eqn:viewchange-msg-final} 
the communication complexity is also $O(fn+n)$ during view change.
Therefore, we have the following result:

\begin{lemma}
\label{lemma:replica-messages}
The number of messages exchanged 
between replicas in an epoch or during view change are $O(fn+n)$.
\end{lemma}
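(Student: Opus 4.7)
The plan is to combine the per-case bounds already derived in the preceding subsections into a single asymptotic estimate. First I would observe that, ignoring view changes, any epoch falls into exactly one of the two scenarios already analyzed: messages triggered by a faulty primary (bounded in Eq.~\ref{eqn:primary-msg}) or messages triggered by faulty complainers (bounded by the maximum of Eqs.~\ref{eqn:replica-msg1} and~\ref{eqn:replica-msg2}). Each of these bounds was shown to be $O(fn+n)$, so even adding all three to safely cover the case when both types of misbehavior occur simultaneously in one epoch preserves the $O(fn+n)$ bound.

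Next I would handle the view change case by invoking Eq.~\ref{eqn:viewchange-msg-final}. The only nontrivial ingredient here is justifying that the quantity $e$ from Eq.~\ref{eqn:viewchange-msg2}, namely the number of committed blocks between the latest stable checkpoint $H$ and the highest committed block $s'$, is $O(1)$. This follows directly from the checkpoint policy of Section~\ref{section:Checkpoints}: since a stable checkpoint is established every fixed number of blocks (e.g., 200) and $p'$ refuses to serve blocks older than the high watermark, $e$ is a constant independent of $n$ and $f$. Substituting this into Eq.~\ref{eqn:viewchange-msg-final} gives $fn + 3n + O(n) = O(fn + n)$ messages for the entire view change protocol.

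To combine the two regimes, I would note that the worst case for a single epoch is one in which a view change is triggered inside the epoch. In that case the total replica-to-replica message count is at most the sum of the non-view-change bound and Eq.~\ref{eqn:viewchange-msg-final}. Since the sum of a constant number of $O(fn+n)$ terms is itself $O(fn+n)$, the lemma follows for both the epoch-only and the epoch-with-view-change cases.

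The main obstacle I expect is not any individual calculation but the accounting: making sure every source of messages — complaints to successive windows, window-node responses, broadcasts issued when a window node recursively reaches its own window, $VIEWCHANGE$ messages, $Q$ and $V$ broadcasts, and block/$COMMIT$ download traffic — is charged exactly once, and that adversarial strategies combining a malicious primary with malicious complainers cannot multiply (rather than merely add) the bounds. The checkpoint invocation is the one place where the argument leans on an external mechanism rather than on the window protocol itself, and it is essential to prevent the $6ef'$ term in Eq.~\ref{eqn:viewchange-msg2} from dominating the final estimate.
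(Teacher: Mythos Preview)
Your proposal is correct and follows essentially the same approach as the paper: the paper simply combines Equations~\ref{eqn:primary-msg}, \ref{eqn:replica-msg1}, and \ref{eqn:replica-msg2} for the epoch case and invokes Equation~\ref{eqn:viewchange-msg-final} (with $e$ constant by the checkpoint discipline) for the view-change case. Your additional care about double-counting and combined adversarial strategies is more explicit than the paper's own treatment, but the underlying argument is the same.
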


Combining Lemmas \ref{lemma:client-messages} and \ref{lemma:replica-messages}
we obtain the main result for the communication complexity:

\begin{theorem}[Communication complexity]
For $\tau$ initiated transactions in an epoch, 
the communication complexity is $O(\tau n + fn)$.
For constant $\tau$, the communication complexity is $O(fn+n)$.
\end{theorem}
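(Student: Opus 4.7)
The plan is to assemble the theorem directly from the two component lemmas already proved: Lemma \ref{lemma:client-messages} for the client-replica side (scaling with the number of transactions $\tau$) and Lemma \ref{lemma:replica-messages} for the replica-replica side (scaling only with $f$ and $n$ per epoch). The key conceptual point I would need to highlight is that an epoch in Musch processes a single block, and that block aggregates all $\tau$ transactions initiated in the epoch; hence the replica-replica cost is charged once per epoch rather than once per transaction, while the client-replica cost is charged once per transaction.

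First I would apply Lemma \ref{lemma:client-messages} and multiply by $\tau$: each of the $\tau$ transactions costs at most $O(n)$ messages in client-to-replica and replica-to-client exchanges (initial submission, post-timeout broadcast, forwarding to the primary, and the eventual $REPLY$/$ACK$ messages), so the total client-replica traffic for the epoch is at most $\tau \cdot O(n) = O(\tau n)$.

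Next I would apply Lemma \ref{lemma:replica-messages}: within a single epoch, whether execution proceeds normally, enters recovery, or escalates to view change, the replica-replica messages are bounded by $O(fn+n)$. Crucially this count is per epoch, not per transaction, because the primary issues one $ORDER$ message for the block (which already contains all $\tau$ transactions), collects one round of $RESPONSE$ messages, and broadcasts one $COMMIT$; the recovery-mode window traffic analyzed in Equations~\ref{eqn:primary-msg}--\ref{eqn:replica-msg2} and the view-change traffic analyzed in Equation~\ref{eqn:viewchange-msg-final} are likewise bounded per epoch.

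Summing the two contributions gives $O(\tau n) + O(fn+n) = O(\tau n + fn)$, proving the first claim. For the second claim, if $\tau$ is constant then $\tau n = O(n)$, which is absorbed into the additive $n$ term, yielding $O(fn+n)$. The only real obstacle is the bookkeeping observation in the previous paragraph, namely that packaging $\tau$ transactions into one block prevents the $O(fn+n)$ replica-replica bound from being multiplied by $\tau$; once that is stated, the arithmetic is immediate and no further case analysis is needed.
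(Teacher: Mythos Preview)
Your proposal is correct and mirrors the paper's own argument: the paper simply states that combining Lemma~\ref{lemma:client-messages} (multiplied by $\tau$) with Lemma~\ref{lemma:replica-messages} yields the theorem, and you have spelled out exactly that combination, including the relevant observation that the replica--replica cost is incurred once per epoch (per block) rather than once per transaction. There is nothing to add.
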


\section{Conclusions}
\label{section:conclusion}
In this paper we proposed Musch, 
a BFT-based consensus protocol,
in an effort to avoid excessive messages and improve the scalability of blockchain algorithms.
Through the use of windows, the algorithm adapts to the actual number of faulty nodes $f$, and in this way 
it avoids unnecessary messages.
This improvement does not sacrifice 
on the latency, since our algorithm still uses a small number of communication rounds. For future work, it would be interesting to investigate whether we can decrease the message complexity further, i.e. to $O(n)$ under $f$ faults, by introducing an intelligent scheme to detect faulty nodes and foil attempts to increase message complexity.
\bibliographystyle{IEEEtran}
\bibliography{OPBFT}
\end{document}